\documentclass{sig-alternate-per}

\usepackage{blindtext}
\usepackage{array,xcolor,colortbl,graphicx,multirow}
\usepackage{comment}
\usepackage{balance}
\usepackage{algorithm}
\usepackage[noend]{algpseudocode}
\usepackage[font={footnotesize}]{subcaption}
\usepackage[font={footnotesize}]{caption}
\usepackage{breakcites}
\usepackage{booktabs}
\usepackage{xspace}

\newtheorem{observation}{Observation}

\newcommand{\ALG}{\textsc{Bma}\xspace}
\newcommand{\OPT}{\textsc{Opt}\xspace}
\newcommand{\OFF}{\textsc{Off}\xspace}
\newcommand{\DET}{\textsc{Det}\xspace}
\newcommand{\On}{\textsc{On}\xspace}

\newcommand{\FixSaturation}{\textsc{FixSaturation}\xspace}
\newcommand{\FixMatching}{\textsc{FixMatching}\xspace}
\newcommand{\cost}{\textsc{cost}}
\newcommand{\cnt}[1]{\ensuremath{h_{#1}}}
\newcommand{\req}[1]{\ensuremath{H_{#1}}}
\newcommand{\matched}[1]{\ensuremath{M_{#1}}}
\newcommand{\edges}[1]{\ensuremath{E_{#1}}}
\newcommand{\edgesp}[1]{\ensuremath{E'_{#1}}}
\newcommand{\thresh}[1]{\ensuremath{T_{#1}}}
\newcommand{\VV}{\ensuremath{V^2}}
\newcommand{\assign}[1]{\ensuremath{A_{#1}}}
\newcommand{\assigne}[2]{\ensuremath{A_{#1}(#2)}}
\newcommand{\I}{\ensuremath{\mathcal{I}}}


\newtheorem{lemma}{Lemma} 
\newtheorem{theorem}{Theorem}

\title{Online Dynamic B-Matching\thanks{Research supported by the European Research Council
(ERC) under grant agreement number 864228 (AdjustNet project)
and the Polish National Science Centre grants 2016/22/E/ST6/00499
and 2015/18/E/ST6/00456.}}
\subtitle{With Applications to Reconfigurable Datacenter Networks
}

\numberofauthors{2}

\author{
%
%
\alignauthor Marcin Bienkowski\\
       \affaddr{University of Wroc{\l}aw, Poland}\\
       \email{marcin.bienkowski@cs.uni.wroc.pl} 
\alignauthor David Fuchssteiner\\
       \affaddr{University of Vienna, Austria}\\
       \email{david.alexander.fuchssteiner@univie.ac.at}
\and
\alignauthor Jan Marcinkowski\\
       \affaddr{University of Wroc{\l}aw, Poland}\\
       \email{jan.marcinkowski@cs.uni.wroc.pl}
\alignauthor Stefan Schmid\\
       \affaddr{University of Vienna, Austria}\\
       \email{stefan\_schmid@univie.ac.at}
}

\begin{document}

\maketitle
\begin{abstract}
This paper initiates the study of online algorithms for the 
maximum weight $b$-matching problem, a generalization of maximum weight matching
where each node has at most $b \geq 1$ adjacent matching edges. 
The problem is motivated by emerging optical technologies which allow
to enhance datacenter networks with reconfigurable
matchings, providing direct connectivity between
frequently communicating racks. These additional links
may improve network performance, by leveraging  
spatial and temporal structure in the workload.
We show that the underlying algorithmic problem
features an intriguing connection to online paging (a.k.a.~caching),
but introduces a novel challenge. 
Our main contribution is an online algorithm which 
is $O(b)$-competitive; we also prove that 
this is asymptotically optimal. 
We complement our theoretical results with extensive trace-driven
simulations, based on real-world datacenter workloads as well as
synthetic traffic traces. 
\end{abstract}



\keywords{online algorithms, competitive analysis, b-matching, reconfigurable networks, demand-aware networks}

\sloppy


\section{Introduction}\label{sec:intro}

\subsection{Motivation: Reconfigurable Datacenters}

The popularity of distributed data-centric applications 
related to machine learning and AI has led to an explosive
growth of datacenter traffic, and researchers are hence
making great efforts to design more efficient datacenter networks,
providing a high throughput at low cost.
Indeed, over the last
years, much progress has been made in the design of innovative datacenter
interconnects, based on 
fat-tree topologies~\cite{alFares2008,f10}, hypercubes~\cite{bcube,mdcube},
expanders~\cite{xpander} or random graphs~\cite{jellyfish},
among many others~\cite{dc-survey-1,dc-survey-2}.
All these networks have in common that their topology
is static and fixed.

An emerging intriguing alternative to these static datacenter networks
are reconfigurable networks~\cite{ccr18san,firefly,helios,sigact19,projector,DBLP:conf/hotnets/KandulaPB09,reactor,opera,rotornet,eclipse,xweaver,zhou2012mirror}: networks whose topology can be 
changed \emph{dynamically}.
In particular, novel optical technologies
allow to provide ``short cuts'', i.e., 
direct connectivity between top-of-rack 
switches,
based on dynamic matchings. 
First empirical studies demonstrate the potential of 
such reconfigurable networks, which can 
deliver very high bandwidth efficiency at low cost.

The matchings provided by reconfigurable networks
are either periodic (e.g.,~\cite{ballani2020sirius,opera,rotornet})
or demand-aware (e.g.,~\cite{avin2020online,dan,avin2019renets,firefly,helios,ancs18,projector}).
The latter is attractive as it allows
to leverage structure in the demand: datacenter traffic
is known to be highly structured, e.g., traffic
matrices are typically sparse and some flows (sometimes called elephant flows)
much larger than others~\cite{benson2010understanding,roy2015inside}.
This may be exploited: in principle, demand-aware datacenter networks
allow to directly match racks which communicate more
frequently,  leveraging spatial and
temporal locality in the workload~\cite{sigmetrics20complexity}.
These reconfigurable matchings are usually assumed to 
enhance a given fixed datacenter topology
based on traditional electric switches~\cite{projector}: 
the remaining traffic (e.g., mice flows) can be routed along the
fixed network, e.g., using classic shortest path control planes such as ECMP.

The advent of such hybrid static-dynamic datacenter networks introduces an
online optimization problem: how to enhance a given fixed topology with a set of
additional shortcut ``demand-aware'' edges, such that the current demand is served optimally
(e.g., large flows are routed along short paths, minimizing
the ``bandwidth tax''~\cite{rotornet}), while at the same time reconfiguration costs are kept
minimal  (reconfigurations take time and can 
temporarily lead to throughput loss).

\subsection{Problem in a Nutshell}

The above problem can be modeled as an online dynamic version of the classic
$b$-matching problem~\cite{anstee1987polynomial} (where $b$ is the number of
optical switches). In this problem, each node can be connected with at 
most $b$ other nodes (using optical links), which results in a $b$-matching.

Interestingly, while the offline version of the $b$-matching problem has been
studied intensively in the past (e.g., in the context of
matching applicants to posts)~\cite{Schrij03}, we are not aware of any work on the dynamic
online variant. As the problem is fundamental and finds applications beyond the
reconfigurable datacenter design problem, we present it
in the following abstract form.

\smallskip
\noindent\textbf{Input.} We are given an arbitrary (undirected) static weighted
and connected network on the set of nodes $V$ connected by a set of
non-configurable links $F$: the fixed network. Let $\VV$ be the set of
all possible unordered pairs of nodes from~$V$. For any pair $\tau = \{u,v\} \in
\VV$, let $\ell_\tau$ denote the length of a
shortest path between nodes $u$ and~$v$ in graph $G=(V,F)$.
 Note that $u$ and $v$
are not necessarily directly connected in~$F$.

The fixed network can be enhanced with reconfigurable links, providing a matching of degree $b$: 
Any node pair from $\VV$ may become a \emph{matching edge} 
(such an edge corresponds 
to a reconfigurable optical link), but 
the number of matching edges adjacent to any node has to be at most $b$,
for a given integer $b \geq 1$.

The demand is modelled as a sequence of 
communication requests\footnote{A request could either be an individual packet or a certain amount of data transferred.
This model of a request sequence is often considered in the literature and is more fine-grained than, e.g., a sequence of traffic matrices.} 
$\sigma=\{s_1,t_1\},\{s_2,t_2\}, \ldots$ revealed over time,
where $\{s_i,t_i\} \in V^2$.

\smallskip
\noindent\textbf{Output and Objective.} The goal is to 
schedule the reconfigurable links over time, that is,
to maintain 
a dynamically
changing $b$-matching $M \subseteq \VV$. Each node pair from $M$ is called 
a \emph{matching edge} and we require that each node has
at most $b$ adjacent matching edges. We aim to jointly minimize routing
and reconfiguration costs, defined below.

\smallskip
\noindent\textbf{Costs.} 
The routing cost for a request $\tau = \{s,t\}$ depends on whether $s$ and $t$ are
connected by a matching edge. In this model, a~given request can
either only take the fixed network or a direct matching edge (i.e., routing is
segregated~\cite{projector}). If $\tau \notin M$, the requests is routed exclusively on the
fixed network, and the corresponding cost is~$\ell_\tau$
(shorter paths imply smaller resource costs, i.e., lower ``bandwidth tax''~\cite{rotornet}). 
If $\tau \in M$, the
request is served by the matching edge, and the routing costs 0 (note that this is the most challenging cost function:
our result only improves if this cost is larger). 

Once the request is served, an algorithm may modify the set of matching edges: 
reconfiguration costs $\alpha$ per each node pair added or removed from the
matching $M$. (The reconfiguration cost and time can be assumed to be independent
of the specific edge.)

\smallskip
\noindent\textbf{Online algorithms.} 
An algorithm $\On$ is \emph{online}
if it has to take decisions without knowing the future requests
(in our case, e.g., which edge to include next in the matching
and which to evict). 
Such an algorithm is said to be $\rho$-competitive~\cite{BorEl-98} 
if there exists a~constant $\beta$ such that 
for any input instance~$\I$, it holds that 
\[
	\cost(\On,\I) \leq \rho \cdot \cost(\OPT,\I) + \beta \,,
\]
where $\cost(\OPT,\I)$ is the cost of the optimal (offline) solution 
for~$\I$. It is worth noting that $\beta$ can depend on 
the parameters of the network, such as the number of nodes, 
but has to be independent of the actual sequence of requests. 
Hence, in the long run, this additive term $\beta$ becomes negligible
in comparison to the actual cost of online algorithm $\On$.

\newpage

\subsection{Our Contributions}

This paper initiates the study of a
natural problem, online dynamic $b$-matching.
For example, this problem finds applications 
in the context of emerging reconfigurable
datacenter networks.

We make the following contributions:
\begin{itemize}
	\item We show that the online dynamic $b$-matching problem
	features an interesting connection to online paging problems,
	however, with a twist, introducing a new challenge.
	\item We present an $O((1+\ell_{\max}/\alpha) \cdot b)$-competitive deterministic algorithm, where $\ell_{\max} = \max_{e \in \VV} \ell_e$. 
	Note that in all relevant practical applications $\ell_{\max} \ll \alpha$ 
	(i.e., the cost of routing between any two nodes is much smaller than the cost 
	of reconfiguration, and hence the term $ \ell_{\max}/\alpha$ is negligible).
	\item We derive a lower bound which
	shows that no deterministic algorithm can achieve a
	competitive ratio better than $b$.
	\item We verify our approach experimentally, performing
	extensive trace-driven simulations, based on real datacenter workloads
	as well as synthetic traffic traces.
\end{itemize}


\subsection{Challenges, Technical Novelty, Scope}

At the heart of our approach lies the observation that online $b$-matching is
similar to online paging~\cite{AcChNo00,FKLMSY91,McGSle91,SleTar85}: each node
in the network can manage its reconfigurable edges in a cache of size~$b$.
However, making a direct reduction to caching seems impossible as reconfigurable edges
involve \emph{both} incident nodes, which introduces non-trivial dependencies.
Without accounting for these dependencies, the competitive ratio would be in the
order of the total number of reconfigurable edges in the network, whereas we in
this paper derive results which only depend on the number of per-node edges. Our
algorithms hence combine ``per-node caches'' in a clever way.

Generally, we believe that the notion of link caching 
has interesting implications for reconfigurable network designs
beyond the model considered in this paper. In particular, caching strategies
can typically be implemented locally, and hence may allow to overcome
centralized control overheads, similar to the stable matching algorithms
proposed in the literature~\cite{projector}. However, we leave the discussion of 
of such decentralized schedulers to future work. 

\subsection{Organization}

The remainder of this paper is organized as follows.
Our online algorithm is described and analyzed in 
Section~\ref{sec:algo},
and the lower bound is presented in Section~\ref{sec:lower}.
We report on our simulation results in 
Section~\ref{sec:simulations}.
After reviewing related literature in Section~\ref{sec:relwork},
we conclude our contribution in Section~\ref{sec:conclusion}.


\section{Algorithm BMA}
\label{sec:algo}

\algdef{SE}[SUBALG]{Indent}{EndIndent}{}{\algorithmicend\ }%
\algtext*{Indent}
\algtext*{EndIndent}
\begin{algorithm*}[t]
\caption{Algorithm \ALG}
\label{alg:main}
\begin{algorithmic}[1]
	\State{\textbf{Initialization:}} 
	\Comment{\emph{Matching is empty and counters are zero}}
	\Indent
	\State{$M \gets \emptyset$}
	\For{each edge $e$}
		
		\State{$\cnt{e} \gets 0$} 
	\EndFor
	\EndIndent

	\State{}

	\State{\textbf{Request $\tau = \{u,v\}$ arrives:}} 
	\Indent
	\If{$\tau \notin M$}
		\State{$\cnt{\tau} \gets \cnt{\tau} + 1$}		
		\If{$\cnt{\tau} = \thresh{\tau}$}
				\Comment{\emph{If $\tau$ becomes saturated,}}
			\State{Execute $\FixSaturation(u,\tau)$}
			\State{Execute $\FixSaturation(v,\tau)$}
			\If{$\cnt{\tau} = \thresh{\tau}$}
					\Comment{\emph{and if no desaturation event occured,}}
				\State{Execute $\FixMatching(u)$}
				\State{Execute $\FixMatching(v)$}
				\State{\label{line:add_to_matching} $M \gets M \cup \{ \tau \}$}
					\Comment{\emph{add $\tau$ to the matching.}}
			\EndIf
		\EndIf
	\EndIf
	\EndIndent
	
	\State{}
	\State{\textbf{Routine} $\FixSaturation(w,\tau)$\textbf{:}}
	\Indent
	\State{$\edgesp{w} = \edges{w} \setminus \{ \tau \}$}
	\If{\label{line:too_many_marked} 
			$|\edgesp{w} \cap \{ e : \cnt{e} = \thresh{e} \}| \geq b$}
				\Comment{\emph{If the number of saturated node pairs from $\edgesp{w}$ is at least $b$,}}
				\For{\label{line:unmarking} each edge $e \in \edges{w}$}
				\Comment{\emph{reset counters of all node pairs from $\edges{w}$ 
					(desaturation event at $w$).}}
			\State{\label{line:unmarking2} $\cnt{e} \gets 0$} 
		\EndFor
	\EndIf	
	\EndIndent
	
	\State{}
	\State{\textbf{Routine} $\FixMatching(w)$\textbf{:}}
	\Indent
	\If{\label{line:matching_too_large} $|M \cap \edges{w}| = b$}
		\Comment{\emph{If there are already $b$ incident matching edges,}}
		\State{\label{line:find_q_prime} 
			Pick any $e^* \in M \cap \edges{w}$ such that $\cnt{e^*} < \thresh{e^*}$}
			\Comment{\emph{remove any unsaturated edge $e^*$ from the matching.}}
		\State{$M \gets M \setminus \{e^*\}$}
	\EndIf
	\EndIndent
\end{algorithmic}
\end{algorithm*}

This section introduces our online $b$-matching 
algorithm, together with a competitive analysis.
As described above, in our case study of reconfigurable
networks, the matching links may for example describe the 
reconfigurable links provided by optical circuit switches, 
offering shortcuts between datacenters racks. 

Before we present the algorithm and its analysis
in details, let us first provide some intuition of our approach
and the underlying challenges. 
To this end, let us for now assume that the fixed network
$G=(V,F)$ is a complete unweighted graph (e.g., capturing the
distances in a datacenter network),
that $\alpha = 1$,
and that all requests are node pairs involving one chosen
node $w$. 
Also recall that each node can have at most~$b$ incident matching edges.

In this simplified scenario, we can observe that 
the choice of an appropriate set of matching edges becomes essentially a
variant of online caching (more precisely, a~variant of online paging with
bypassing~\cite{EpImLN15}). That is, an~algorithm maintains a set of at most $b$ edges 
incident to $w$ that are
in the matching. These edges can be thought as ``cached'': 
subsequent requests
to matched (cached) edges do not incur further cost.
Thus, from the perspective of a single node, the question is roughly equivalent
to maintaining a cache of at most $b$ items
(which leads to the typical algorithmic questions such as 
which item to cache or evict next).

However, if we simply run independent paging algorithms 
at all nodes, local
perspectives of particular nodes might not be coherent with each other: one
endpoint of a node may want to keep an~edge in the matching, while the other may
want to evict it from the matching. 
This is practically undesirable, as transmitters and receivers typically need
to be aligned and coordinated~\cite{projector,rotornet}.
To illustrate this issue, assume that node
$w$ wants to add a~new matching edge $\{w,w'\}$, but it already has $b$ incident
matched edges. To accommodate a new matching edge, $w$ removes edge $\{w,w''\}$
from the matching. This however removes the edge not only from the ``cache'' of
node $w$, but also from the ``cache'' of node~$w''$. Handling this coherence
issue with a low overall cost, constitutes a main technical 
challenge that we need to tackle in
our algorithm.

\subsection{Algorithm Definition}

Our algorithm \ALG is defined as follows. For each node pair~$e \in \VV$, \ALG keeps
a counter $\cnt{e}$, initially equal to zero. The value of $\cnt{e}$ will always
be a lower bound for the number of times $e$ has been requested since it was removed from the matching the last time
(or from the beginning of the input sequence
if $e$ was never in the matching). For each node pair $e \in \VV$ we define a
threshold 
\[ \thresh{e} = 2 \cdot \lceil \alpha/\ell_e \rceil . \] 
Once the counter
$\cnt{e}$ reaches $\thresh{e}$, and additional certain conditions are fulfilled,
edge $e$ will be added to the matching. 
Otherwise, the counter will be reset to zero. A node pair $e \in \VV$ whose
counter value is equal to $\thresh{e}$ is called \emph{saturated}; our algorithm
always keeps all saturated node pairs in the matching. Note that these 
$\thresh{e}$ requests to node pair $e$ induced a total cost of 
$\thresh{e} \cdot \ell_e \in [2 \alpha, 2 (\alpha / \ell_e + 1) \cdot \ell_e] = [2 \alpha, 2 \alpha + 2 \ell_e]$.

For any node~$w$, we define $\edges{w} = \{\{w,v\} : v \in V\}$, i.e., 
$\edges{w} \subseteq \VV$ is the set of all node pairs, with one node equal to $w$. 
Recall that, at any time, $\matched{} \subseteq \VV$ denotes the set of matching edges. 

\ALG is designed to preserve three
invariants:
\begin{description}
	\item[Counter invariant:] $0 \leq \cnt{e} \leq \thresh{e}$.
	\item[Saturation invariant:] If $\cnt{e} = \thresh{e}$, then $e \in M$.
	\item[Matching invariant:] If $e \in M$, then $\cnt{e} = \thresh{e}$
		or $\cnt{e} = 0$.
\end{description}
plus an invariant for any node $w$:
\begin{description}
	\item[Saturation degree invariant:] $|\{e \in \edges{w} : \cnt{e} = \thresh{e}\}| \leq b$.
\end{description}

A pseudo-code for our algorithm \ALG is given in Algorithm~\ref{alg:main}.
In the next section, we will explain it in more details,
and prove that it 
never violates the invariants.

\subsection{Maintaining Invariants}

At the beginning, the matching $M$ is empty and the counters of all edges are
zero, and thus all invariants hold. Below we describe what happens upon serving
a communication request $\tau$ by \ALG and how \ALG ensures that all invariants
are preserved.

\ALG first verifies whether $\tau$ is a matching edge. If so, then such a
request incurs no cost and \ALG does nothing. Otherwise, by the saturation
invariant, $\cnt{\tau} \leq \thresh{e} - 1$. In this case, $\ALG$ pays for the 
communication request and increments counter $\cnt{\tau}$. The increment preserves 
the counter invariant. The matching invariant holds emptily as $\tau \notin M$. 
If $\cnt{\tau}$ is still below $\thresh{e}$, then the remaining invariants also hold 
and \ALG does not execute any further actions.

However, if the value of $\cnt{\tau}$ reaches $\thresh{\tau}$ ($\tau$ becomes
saturated), the 
saturation invariant becomes violated ($\tau$ should be a matching
edge) and also
the saturation degree invariants may become violated at $u$ and $v$. We
now explain how \ALG handles these issues.

\ALG first ensures that the saturation degree invariant is satisfied at both
endpoints of $\tau$. To this end, \ALG executes
the \FixSaturation routine at
$u$ and $v$. If at any node $w \in \{u,v\}$ 
the number of saturated node pairs from $\edges{w}$ 
different from $\tau$
is already $b$, all edges of~$\edges{w}$, including~$\tau$, have
their counters reset to zero. In this 
case, we say that a \emph{desaturation
event} occurred at the respective endpoint~$w$. Note that all four cases are possible:
there can be no desaturation event, a~desaturation event can occur at $u$, at
$v$, or at both of $u$ and $v$. 
The execution of the \FixSaturation routines reestablishes the
saturation degree invariants, and preserves counter, saturation and matching
invariants at all edges different from $\tau$.

For edge $\tau$, the corresponding counter and matching invariants clearly hold.
If any desaturation event occurs, then it also fixes 
the saturation invariant for
$\tau$, and \ALG need not do anything more. 
Otherwise (no desaturation event
occurs, that is, $\cnt{\tau}$ is still equal to $\thresh{\tau}$), the saturation
invariant is violated: $\tau$ has to be added to the matching. However, if any
of $\tau$ endpoints already has $b$ incident matching edges, one such edge has
to be removed from the matching before. This is achieved
by the \FixMatching routines
executed at both $u$ and~$v$: if necessary, they remove one incident
non-saturated matching edge. It remains to show that such 
a matching edge indeed
exists. 

\begin{lemma}
\label{lem:alg_well_defined}
A non-saturated matching edge $e^*$ chosen at Line~\ref{line:find_q_prime} of
the algorithm \ALG (in routine $\FixMatching(w)$) always exists. Moreover, when
it is removed, $\cnt{e^*} = 0$.
\end{lemma}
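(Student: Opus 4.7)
My plan is to trace through the conditions that must hold at the moment $\FixMatching(w)$ reaches Line~\ref{line:find_q_prime}, and then combine the matching invariant with the fact that no desaturation event occurred at $w$.

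First I would observe that $\FixMatching(w)$ is called only in the branch where both preceding $\FixSaturation$ calls left $\cnt{\tau}$ equal to $\thresh{\tau}$, i.e.\ no desaturation event occurred at $w$. By the check on Line~\ref{line:too_many_marked} of $\FixSaturation(w,\tau)$, this means
\[
    |\edgesp{w} \cap \{e : \cnt{e} = \thresh{e}\}| \leq b-1,
\]
where $\edgesp{w} = \edges{w} \setminus \{\tau\}$. In other words, among all node pairs incident to $w$ other than $\tau$, strictly fewer than $b$ are saturated.

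Next I would argue that $\tau \notin M$ at the moment we enter $\FixMatching(w)$: the algorithm only enters the whole saturation branch when $\tau \notin M$ on arrival, and $\tau$ is added to $M$ only later, on Line~\ref{line:add_to_matching}. Therefore $M \cap \edges{w} \subseteq \edgesp{w}$. Combined with the hypothesis $|M \cap \edges{w}| = b$ of Line~\ref{line:matching_too_large}, a counting argument gives that $M \cap \edges{w}$ contains at least one non-saturated edge $e^*$, which proves existence.

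Finally, to show $\cnt{e^*} = 0$, I would appeal to the matching invariant, which asserts that any $e \in M$ satisfies $\cnt{e} \in \{0, \thresh{e}\}$. I should briefly justify that this invariant is indeed preserved up to this point: it held before the request, the increment on Line handles only $\tau \notin M$, and the $\FixSaturation$ routines only reset counters to $0$ (which is consistent with the invariant). Since $e^*$ is in $M$ but not saturated, the invariant forces $\cnt{e^*} = 0$. The only subtlety—and the one step I would write carefully—is verifying that the matching invariant is indeed intact at the instant we invoke $\FixMatching(w)$, because all subsequent reasoning hinges on it; the existence argument itself is just pigeonhole.
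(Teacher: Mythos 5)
Your proposal is correct and follows essentially the same route as the paper's proof: both use the failed test at Line~\ref{line:too_many_marked} to bound the number of saturated edges at $w$, combine it with $|M \cap \edges{w}| = b$ and the fact that $\tau$ is saturated but not yet in $M$ to get a pigeonhole argument for existence, and then invoke the matching invariant for $\cnt{e^*} = 0$. Your extra remark about checking that the matching invariant is still intact when $\FixMatching(w)$ is invoked is a reasonable bit of care that the paper handles in its surrounding invariant-maintenance discussion rather than inside the lemma's proof.
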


\begin{proof}
Assume that $\FixMatching(w)$ is executed (for a node $w \in \{u,v\}$). 
Let $S_w = \edges{w} \cap \{ e : \cnt{e} = \thresh{e} \}$ 
be the set of saturated node pairs from $\edges{w}$.
Note that $\FixMatching(w)$ is preceded
by the execution of $\FixSaturation(w,\tau)$: it ensures that $|S_w| \leq b$
($S_w$~contains $\tau$ and at most $b-1$ other edges). 

Let $M_w = M \cap \edges{w}$ be the set of matching edges 
incident to $w$. The condition at Line~\ref{line:matching_too_large} ensures that
$|M_w| = b$.

Now, observe that the set
$S_w \setminus M_w$ is non-empty as it contains the requested node pair $\tau$.
However, as $|M_w| \geq |S_w|$, the set $M_w \setminus S_w$ is non-empty either,
and any of its edges is a viable candidate for~$e^*$.

Finally, by the matching invariant, the counter of a~matching edge $e \in M_w$ 
is equal either to $\thresh{e}$ or $0$. Hence, the counters of all matching edges in $M_w \setminus S_w$
are zero, and thus $\cnt{e^*} = 0$.
\end{proof}

\subsection{Desaturation Events}

Fix any node $w$. For the analysis of \ALG, a natural approach would be to
estimate the number of paid requests to all node pairs of $\edges{w}$ between
two desaturation events at $w$. This number corresponds to the total increase of all
counters corresponding to these node pairs in the considered time interval.
However, such an approach fails as these counters may be reset multiple times
because of desaturation events at other nodes. In particular, it is possible
that a node pair $\{w,u\}$ from $\edges{w}$ in included multiple times in the
matching between two desaturation events at $w$. Therefore, we develop a~more
complicated accounting scheme.

First, not only we track counters, but for any node pair~$e$ we keep track
of a set $\req{e}$ of requests paid by \ALG (that caused the increase 
of the counter $\cnt{e}$, i.e., $|\req{e}| = \cnt{e}$). 
When the counter $\cnt{e}$ is reset, the set $\req{e}$ is emptied.

When requests paid by \ALG become removed from sets $\req{e}$, we map them to 
the corresponding desaturation 
events: for any desaturation event $d$, we create a~set of requests
$\assign{d}$, so that all these sets are disjoint. Requests that still belong to the
current contents of sets $\req{e}$ are not (yet) mapped. More precisely, note
that when a~desaturation event at a node $w$ occurs, we empty all sets $\req{e}$
for node pairs $e \in \edges{w}$. If a request $\tau = \{u,v\}$ triggers a
single desaturation event $d_u$ at $u$, then we simply set $\assign{d_u} =
\biguplus_{e \in \edges{u}} \req{e}$, i.e., we map all requests corresponding
to counters that were reset by~$d_u$. If, however, a request $\tau = \{u,v\}$
triggers desaturation events $d_u$ and $d_v$ both at $u$ and $v$, we want
requests from $\req{\tau}$ to be mapped (partially) to both desaturation
events. Thus, we partition $\thresh{\tau}$ requests from $\req{\tau}$
arbitrarily into two subsets $\req{\tau}^v$ and $\req{\tau}^u$, each of
cardinality $\thresh{\tau}/2 = \lceil \alpha / \ell_e \rceil$, and set
$\assign{d_u} = \req{\tau}^u \cup \biguplus_{e \in \edges{u} \setminus \tau}
\req{e}$ and $\assign{d_v} = \req{\tau}^v \cup \biguplus_{e \in \edges{v}
\setminus \tau} \req{e}$.

For any request set $P$, let $\ell(P) = \sum_{e \in P} \ell_e$, i.e.,
$\ell(P)$ is the cost of serving all requests from $P$ without using matching edges.
For any desaturation event $d$ and a node pair $e$, let $\assigne{d}{e}$ be
the requests of~$\assign{d}$ to node pair $e$. 
The following observation follows immediately by the definition of \ALG
and sets $\assign{d}$. 

\begin{observation}
\label{obs:desaturation}
Fix any desaturation event $d$ at any node $w$. Then, the following
properties hold:
\begin{enumerate}
\item \label{item:at_most_tresh}
	For any node pair $e \in \edges{w}$, it holds that $|\assigne{d}{e}| \leq 
	\thresh{e}$. 
\item \label{item:b_or_b1_with_thresh}
	There exists a set $P \subseteq \edges{w}$ of cardinality $b+1$, such 
	that $|\assigne{d}{e}| \geq \thresh{e}/2$ for each $e \in P$.
\end{enumerate}
\end{observation}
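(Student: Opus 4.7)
The plan is to unpack the definition of a desaturation event together with the partitioning scheme for $\assign{d}$, and verify both inequalities by a direct case analysis on whether the triggering request $\tau$ causes a second (simultaneous) desaturation at the other endpoint.

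First I would recall the triggering condition: a desaturation event at node $w$ arises only when a request $\tau = \{u,v\}$ with $w \in \{u,v\}$ pushes $\cnt{\tau}$ up to $\thresh{\tau}$ and the check at Line~\ref{line:too_many_marked} finds $b$ saturated pairs in $\edges{w} \setminus \{\tau\}$. Consequently, at the instant just before the reset, every pair $e$ in this $b$-element witness set satisfies $\cnt{e} = \thresh{e}$, and additionally $\cnt{\tau} = \thresh{\tau}$. The assignment rule either sets $\assigne{d}{e} = \req{e}$ wholesale for all $e \in \edges{w}$ (when $d$ is the only desaturation triggered by $\tau$), or does so for every $e \neq \tau$ while assigning only one half of $\req{\tau}$, of size $\thresh{\tau}/2$, to $\assign{d}$ (when $\tau$ simultaneously desaturates at the other endpoint). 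Note that no other pair belongs to both $\edges{u}$ and $\edges{v}$, so $\tau$ is the unique pair that can be split.

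For part~(\ref{item:at_most_tresh}), I would invoke the counter invariant $\cnt{e} \leq \thresh{e}$ together with the bookkeeping identity $|\req{e}| = \cnt{e}$, giving $|\req{e}| \leq \thresh{e}$ for every $e \in \edges{w}$. In the wholesale branch, $|\assigne{d}{e}| = |\req{e}| \leq \thresh{e}$; in the split branch, the same bound applies to every $e \neq \tau$, while $|\assigne{d}{\tau}| = \thresh{\tau}/2 \leq \thresh{\tau}$.

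For part~(\ref{item:b_or_b1_with_thresh}), I would choose $P$ to be $\tau$ together with the $b$ saturated witness pairs in $\edges{w} \setminus \{\tau\}$, so that $|P| = b+1$. For each $e \in P \setminus \{\tau\}$, the saturation at the moment of the event yields $|\req{e}| = \thresh{e}$, and since such $e$ is always assigned wholesale, $|\assigne{d}{e}| = \thresh{e} \geq \thresh{e}/2$. For $\tau$ itself, $|\assigne{d}{\tau}|$ equals either $\thresh{\tau}$ or $\thresh{\tau}/2$ depending on whether a simultaneous event occurs at the other endpoint, and both values meet the required threshold. The only subtlety — and the closest thing to an obstacle — is tracking the halving of $\req{\tau}$ correctly across the two possible desaturation events; this is precisely what motivates the factor of $2$ in the definition $\thresh{e} = 2 \lceil \alpha/\ell_e \rceil$.
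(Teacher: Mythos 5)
Your proposal is correct and matches the paper's intent exactly: the paper states that the observation ``follows immediately by the definition of \ALG and sets $\assign{d}$,'' and your case analysis (wholesale assignment versus the split of $\req{\tau}$ into halves of size $\thresh{\tau}/2$, with $P$ taken to be $\tau$ plus the $b$ saturated witnesses in $\edges{w}\setminus\{\tau\}$) is precisely the verification being elided. Your remark that $\tau$ is the only pair lying in both $\edges{u}$ and $\edges{v}$, hence the only one that can be split, is the right detail to pin down.
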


\subsection{Competitive Ratio of BMA}

We now use sets $\assign{d}$ to estimate the costs of \ALG and \OPT. 
We do not aim at optimizing the constants, but rather at the
simplicity of the argument.

\begin{lemma}
\label{lem:alg_cost}
Let $D(\I)$ be the set of all desaturation events that occurred during 
input $\I$.
Then, 
\[ 
	\cost(\ALG,\I) \leq 4 \cdot |\VV| \cdot (\alpha + \ell_{\max}) + 
	2 \sum_{d \in D(\I)} \ell(\assign{d}) \,.
\]
\end{lemma}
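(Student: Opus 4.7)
The plan is to bound $\ALG$'s routing cost and reconfiguration cost separately, expressing each in terms of $\sum_{d\in D(\I)}\ell(\assign{d})$ plus a residual of order $|\VV|\cdot(\alpha+\ell_{\max})$.

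For the routing cost, I would argue that every paid request is placed in some set $\req{\tau}$ when served, and from then on it either (i) migrates into some $\assign{d}$ the next time $\cnt{\tau}$ is reset by a desaturation event at an endpoint of $\tau$, or (ii) remains in $\req{\tau}$ until the end of the input. Hence $\{\assign{d}\}_{d\in D(\I)}$ together with the end-of-input sets $\{\req{e}\}_{e\in\VV}$ are pairwise disjoint and jointly account for every paid request, so $\ALG$'s total routing cost equals $\sum_{d\in D(\I)}\ell(\assign{d}) + \sum_{e\in\VV}\ell(\req{e})$. The counter invariant gives $|\req{e}|\leq\thresh{e}\leq 2(\alpha/\ell_e+1)$, hence $\ell(\req{e})\leq 2(\alpha+\ell_e)\leq 2(\alpha+\ell_{\max})$, and summing over the $|\VV|$ pairs bounds the residual by $2|\VV|(\alpha+\ell_{\max})$.

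For the reconfiguration cost, I would charge each insertion of a pair $e$ into $M$ against the $\thresh{e}$ paid requests to $e$ that sit in $\req{e}$ at the moment of saturation, whose total routing cost is $\thresh{e}\cdot\ell_e\geq 2\alpha$. These witness requests are pairwise disjoint across all insertions (of the same or different pairs), because $\req{e}$ is emptied on every reset of $\cnt{e}$. So the total number of insertions is at most $(\ALG\text{'s routing cost})/(2\alpha)$. Since $M$ starts empty, the number of removals never exceeds the number of insertions, and therefore the reconfiguration cost, equal to $\alpha$ times the sum of insertions and removals, is at most $\ALG$'s routing cost.

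Combining the two estimates yields $\cost(\ALG,\I)\leq 2\cdot(\text{routing cost})\leq 2\sum_{d\in D(\I)}\ell(\assign{d})+4|\VV|(\alpha+\ell_{\max})$, exactly matching the claimed bound. The only delicate point I anticipate is guaranteeing no double-counting of paid requests among the $\assign{d}$'s, the residual $\req{e}$'s, and the witnesses to distinct insertions; this is automatic from the disjointness of the $\assign{d}$ sets as constructed in the preceding paragraph of the paper and from the fact that any two saturations of the same pair $e$ are separated in time by a reset of $\cnt{e}$ that first empties $\req{e}$.
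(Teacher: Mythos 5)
Your proposal is correct and follows essentially the same route as the paper's proof: the same partition of paid requests into the $\assign{d}$ sets plus the end-of-input residuals $\req{e}$, the same per-pair bound $\ell(\req{e})\leq 2(\alpha+\ell_{\max})$, and the same charging of each matching insertion to the $\thresh{e}$ disjoint witness requests of total cost at least $2\alpha$ (with removals dominated by insertions). The delicate disjointness point you flag is exactly what the paper handles via Lemma~\ref{lem:alg_well_defined}, so nothing is missing.
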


\begin{proof}
Within this proof, we consider contents of sets $\req{e}$
right after \ALG processes the whole input $\I$.
For any node pair $e \in \VV$, the set $\req{e}$ contains at most $\thresh{e}$
edges, and therefore
\[
	\ell(\req{e}) \leq \thresh{e} \cdot \ell_{e} \leq 2 \cdot \alpha + 2 \cdot \ell_e 
	\leq 2 \cdot (\alpha + \ell_{\max}) \,.
\]

Any request to a node pair~$e$ paid by \ALG is either in set $\req{e}$ 
or it was already assigned to a set $\assign{d}$ for some desaturation event $d \in D(\I)$.
Thus, the cost of serving all requests by \ALG is at most 
\[
	\sum_{e \in \VV} \ell(\req{e}) + \sum_{d \in D(\I)} \ell(\assign{d}) 
		\leq 2 \cdot |\VV| \cdot (\alpha + \ell_{\max}) + \sum_{d \in D(\I)} 
		\ell(\assign{d}).
\]

To bound the cost of matching changes, 
we observe that by the definition of \ALG, only a saturated node pair $e$
may become included in the matching. If $e$ becomes removed from the matching later,
then by Lemma~\ref{lem:alg_well_defined}, the counter of $e$ must have dropped 
to zero in the meantime. Therefore, any addition of $e$ to the matching can 
be mapped to the unique $\thresh{e}$ paid requests to $e$.
As the cost of such $\thresh{e}$ requests is $\thresh{e} \cdot \ell_e \geq 2 \cdot \alpha$,
the total cost of including $e$ in the matching is dominated by the half of the cost 
of serving requests to $e$. Furthermore, as the number of removals from the matching
cannot be larger than the number of additions, the total cost of excluding $e$ 
from the matching is also dominated by the same amount.
Summing up, the matching reconfiguration cost of \ALG 
is not larger than its request serving cost, i.e.,
\[
	\cost(\ALG,\I)
		\leq 4 \cdot |\VV| \cdot (\alpha + \ell_{\max}) + 2 \sum_{d \in D(\I)} 
		\ell(\assign{d}),
\]
which concludes the lemma.
\end{proof}

\begin{lemma}
\label{lem:opt_cost}
Let $D(\I)$ be the set of all desaturation events that occurred during 
input $\I$.
Then 
\[
	\cost(\OPT,\I) \geq 
	\frac{1}{3 \cdot (b+1) \cdot (1+\ell_{\max}/\alpha)} 
	\sum_{d \in D(\I)} \ell(\assign{d})  \,.
\]
\end{lemma}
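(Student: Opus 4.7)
My plan is to prove the bound one desaturation event at a time and then sum. Fix a desaturation event $d$ at node $w$ and consider the time interval $I_d$ during which the requests in $\assign{d}$ accumulated. I will attribute to $d$ a portion $\text{OPT}_d$ of \OPT's total cost consisting of (i) \OPT's routing cost for the requests in $\assign{d}$ that \OPT does not serve via a matching edge, and (ii) \OPT's reconfiguration cost for matching changes incident to $w$ during $I_d$. The target is the per-event inequality
\[
\ell(\assign{d}) \leq 3(b+1)(1 + \ell_{\max}/\alpha)\,\text{OPT}_d,
\]
after which I sum over all $d$, arguing that $\sum_d \text{OPT}_d \leq \cost(\OPT,\I)$.

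For the per-event inequality I use two ingredients. First, Observation~\ref{obs:desaturation}(\ref{item:b_or_b1_with_thresh}) supplies a set $P \subseteq \edges{w}$ of $b+1$ ``heavy'' pairs, each with at least $\thresh{e}/2 \geq \lceil \alpha/\ell_e \rceil$ requests in $\assign{d}$ and hence routing cost at least $\alpha$. Because \OPT's matching has at most $b$ edges at $w$ at any instant, the $b+1$ pairs of $P$ cannot all be covered throughout $I_d$ without at least one reconfiguration, yielding $\text{OPT}_d \geq \alpha$. Second, I bound the ``matched'' portion of $\ell(\assign{d})$ -- the contribution of requests \OPT serves with matching edges -- by the number of distinct pairs \OPT ever has matched at $w$ during $I_d$, which is at most $b + A_d$ (at most $b$ initially plus one per addition, where $A_d$ counts \OPT's additions at $w$ during $I_d$). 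Each such pair $e$ contributes at most $\thresh{e}\,\ell_e \leq 2(\alpha + \ell_{\max})$ by Observation~\ref{obs:desaturation}(\ref{item:at_most_tresh}), so the matched contribution is at most $2(b+A_d)(\alpha+\ell_{\max})$. Combining with $\text{OPT}_d \geq \alpha A_d$ and $\text{OPT}_d \geq \alpha$ yields at most $2(b+1)(1+\ell_{\max}/\alpha)\,\text{OPT}_d$; the routed contribution is at most $\text{OPT}_d$ by definition, and summing the two gives the per-event inequality.

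The main obstacle is the summation step: a single \OPT reconfiguration of a pair $\{u,v\}$ lies inside intervals $I_d$ at both endpoints, so naively charging its $\alpha$ to each corresponding desaturation event would double-count. I will handle this by crediting each reconfiguration to a single canonical endpoint (e.g., the endpoint whose next desaturation comes first), and likewise each routed request to at most one event (which is automatic since the sets $\assign{d}$ are disjoint), so that the $\text{OPT}_d$'s partition a subset of \OPT's cost. This bookkeeping -- rather than the per-event inequality itself -- is the delicate part and pins down the exact constant in front of $(b+1)(1+\ell_{\max}/\alpha)$.
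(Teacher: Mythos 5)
Your per-event strategy is essentially the paper's: compare $\ell(\assign{d})$ against \OPT's routing cost on $\assign{d}$ plus \OPT's reconfiguration cost at $w$ in the interval between consecutive desaturation events at $w$, using Observation~\ref{obs:desaturation}(2) to force a cost of order $\alpha$ and Observation~\ref{obs:desaturation}(1) to cap each matched pair's contribution by $2(\alpha+\ell_{\max})$. That part of your argument is sound, but it is sound only because $\text{OPT}_d$ contains the \emph{full} cost $\alpha$ of every reconfiguration incident to $w$ during $I_d$: you need $\text{OPT}_d\geq \alpha A_d$ to absorb the up to $b+A_d$ matched pairs, and you need $\text{OPT}_d\geq\alpha$ (via the $b+1$ heavy pairs or via a reconfiguration) when $A_d=0$. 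The genuine gap is your resolution of the double-counting. If a reconfiguration of $\{u,v\}$ is credited only to a canonical endpoint, say $u$, then for the desaturation event $d'$ at $v$ whose interval contains it, $\text{OPT}_{d'}$ no longer includes that $\alpha$, while $A_{d'}$ must still count the addition (the pair $\{u,v\}$ may well be one of the matched pairs absorbing requests of $\assign{d'}$). So $\text{OPT}_{d'}\geq\alpha A_{d'}$ fails, and the per-event inequality collapses exactly where the bookkeeping was supposed to rescue it.

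The paper sidesteps this by splitting \emph{every} \OPT cost equally between the two endpoints of the node pair involved ($\ell_\tau/2$ per endpoint for a routed request, $\alpha/2$ per endpoint for a reconfiguration) and defining $\cost(\OPT,d)$ as the resulting node-cost of $w$ between consecutive desaturation events at $w$. Disjointness of $\sum_d \cost(\OPT,d)$ is then immediate, and the halving merely weakens the per-event constant by a factor of $2$; the counting argument ($k$ additions give $k\alpha/2$, at most $b+k$ pairs ever matched at $w$, all other requests paid) goes through verbatim. You should adopt this device rather than the canonical-endpoint assignment. Be aware that after doing so you will not recover the constant $3$: the paper's own chain of inequalities ends with $\cost(\OPT,d)\geq \ell(\assign{d})/\bigl(6(b+1)(1+\ell_{\max}/\alpha)\bigr)$, and the final theorem uses the factor $12(b+1)$ consistent with $6$ rather than $3$. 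This is harmless for the $O(b)$ competitive ratio, but your claim that the bookkeeping "pins down" the constant $3(b+1)(1+\ell_{\max}/\alpha)$ is not substantiated by the argument you outline.
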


\begin{proof}
To estimate the cost of \OPT, it is more convenient to think that its 
cost is not associated with node pairs but with nodes. That is, we
distribute the cost of \OPT pertaining to node pairs 
(paying for a request, including
an edge in the matching or removing an~edge from the matching) equally 
between the endpoints: When \OPT pays $\ell_\tau$ for a request
at node pair $\tau = \{u,v\}$, we account cost $\ell_\tau/2$ for node $u$ and cost 
$\ell_\tau/2$ for node $v$.
When \OPT pays $\alpha$ for including node pair $\{u,v\}$ into the matching
or excluding it from the matching, we associate cost $\alpha/2$ with node $u$ 
and $\alpha/2$ with node~$v$.

Now, fix a desaturation event $d$ at a node $w$. 
Let $d_0$ be the previous
desaturation event at node $w$. (If $d$ is the first desaturation event at~$w$,
then $d_0$ is the beginning of the input $\I$.) 
Note that all requests of $\assign{d}$
appeared between $d_0$ and $d$.

Let the node cost (in \OPT's solution) of $w$ between $d_0$ and $d$ be denoted
$\cost(\OPT,d)$. As each node-cost paid by \OPT is covered by
at most one term $\cost(\OPT,d)$, it holds that
$\cost(\OPT,\I) \geq \sum_{d \in D(\I)} \cost(\OPT,d)$. 
Hence, our goal is to lower bound the value of $\cost(\OPT,d)$
for any desaturation event $d$ (at some node $w$).

We sort the edges from $\edges{w}$ by the cost of serving them between $d_0$ and $d$.
That is, let $\edges{w} = \{e_1,e_2,\ldots,e_{|V|-1} \}$ and 
\[
	\ell(\assigne{d}{e_1}) \geq \ell(\assigne{d}{e_2}) \geq \dots 
	 \geq \ell(\assigne{d}{e_{|V|-1}})\,.
\]

Let $k$ be the number of node pairs from $\edges{w}$ that \OPT 
added to the matching between $d_0$ and $d$. The corresponding node 
cost of $w$ due to matching changes is then at
least $k \cdot \alpha/2$. Then, the total number of all node pairs 
from~$\edges{w}$ that \OPT may have in the matching at some time 
between $d_0$ and $d$ is at most $b + k$. Therefore, $\OPT$ pays 
for requests from $\assign{d}$ to all node pairs but at most $b+k$ node pairs, i.e.,
\[
	\cost(\OPT,d) \geq 
	k \cdot \alpha / 2 + \ell(\assign{d}) - \sum_{j=1}^{b+k} \ell(\assigne{d}{e_j}) \,.
\]

To lower-bound this amount, we first observe that for any $k \geq 0$, it holds that
\begin{equation}
\label{eq:opt_lower}
	3 \cdot (b+1) \cdot \left[
		\ell(\assigne{d}{e_{b+k+1}}) + k \cdot \alpha / 2
	\right] \geq (b+k+1) \cdot \alpha
\end{equation}
Indeed, if $k = 0$, then 
by Property~\ref{item:b_or_b1_with_thresh}
of Observation~\ref{obs:desaturation}), 
$\ell(\assigne{d}{e_{b+1}}) \geq \alpha/2$, and thus 
\eqref{eq:opt_lower} follows.
If $k \geq 1$,
then $(3/2) \cdot (b+1) \cdot k \geq b+k+1$ holds for any $b \geq 1$,
which implies \eqref{eq:opt_lower}.


Second, by Property~\ref{item:at_most_tresh} of Observation~\ref{obs:desaturation},
for each node pair $e \in \edges{w}$, it holds that 
\begin{equation}
\label{eq:opt_lower_2}
	\ell(\assigne{d}{e}) \leq \thresh{e} \cdot \ell_e \leq 2 \cdot (\alpha + \ell_{\max})
	= 2 \alpha \cdot (1 + \ell_{\max} / \alpha)	\,.
\end{equation}

Therefore, using \eqref{eq:opt_lower} and \eqref{eq:opt_lower_2}, we obtain that
\begin{align*}
\cost(\OPT,d)
 \geq &\; k \cdot \alpha / 2 + \sum_{j=b+k+1}^{|V|-1} \ell(\assigne{d}{e_j}) \\
= &\; k \cdot \alpha / 2 + \ell(\assigne{d}{e_{b+k+1}}) + \sum_{j=b+k+2}^{|V|-1} \ell(\assigne{d}{e_j}) \\
\geq &\; \frac{(b+k+1) \cdot \alpha}{3 \cdot (b+1)} + \sum_{j=b+k+2}^{|V|-1} \ell(\assigne{d}{e_j}) \\
\geq &\; \frac{\sum_{j=1}^{b+k+1} \ell(\assigne{d}{e_j}) }{6 \cdot (b+1) \cdot (1+\ell_{\max}/\alpha)} + \sum_{j=b+k+2}^{|V|-1} \ell(\assigne{d}{e_j}) \\
\geq &\; \frac{1}{6 \cdot (b+1) \cdot (1+\ell_{\max}/\alpha)} \cdot \sum_{j=1}^{|V|-1} \ell(\assigne{d}{e_j}) \\
= &\; \frac{\ell(\assign{d})}{6 \cdot (b+1) \cdot (1+\ell_{\max}/\alpha)} \,.
\end{align*}
Summing this relation over all desaturation events from the input~$\I$ and using 
the relation 
$$\cost(\OPT,\I) \geq \sum_{d \in D(\I)} \cost(\OPT,d)$$ 
yields the lemma.
\end{proof}

\begin{theorem}
\ALG is $O((1+\ell_{\max}/\alpha) \cdot b)$-competitive.
\end{theorem}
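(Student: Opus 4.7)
The plan is to obtain the theorem as a direct composition of Lemma~\ref{lem:alg_cost} and Lemma~\ref{lem:opt_cost}, which were deliberately crafted to meet on the common quantity $\sum_{d \in D(\I)} \ell(\assign{d})$. In other words, the desaturation events and their assigned request sets form a bridge: \ALG's cost can be charged upward through them, and \OPT's cost can be charged downward through them, so a single substitution couples the two bounds.

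Concretely, I would start from Lemma~\ref{lem:alg_cost}, which gives
\[
  \cost(\ALG,\I) \leq 4 \cdot |\VV| \cdot (\alpha + \ell_{\max}) + 2 \sum_{d \in D(\I)} \ell(\assign{d}),
\]
and then rearrange Lemma~\ref{lem:opt_cost} into the form
\[
  \sum_{d \in D(\I)} \ell(\assign{d}) \;\leq\; 3 \cdot (b+1) \cdot (1+\ell_{\max}/\alpha) \cdot \cost(\OPT,\I).
\]
Substituting the second inequality into the first yields
\[
  \cost(\ALG,\I) \leq 6 \cdot (b+1) \cdot (1+\ell_{\max}/\alpha) \cdot \cost(\OPT,\I) + \beta,
\]
where $\beta = 4 \cdot |\VV| \cdot (\alpha + \ell_{\max})$. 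The leading coefficient is $O((1+\ell_{\max}/\alpha) \cdot b)$, which is the claimed competitive ratio.

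The only nontrivial bookkeeping is checking that $\beta$ is a legitimate additive constant in the sense of the competitive definition. Inspecting its form, $\beta$ depends only on the number of node pairs $|\VV|$, the reconfiguration parameter $\alpha$, and the network diameter-like quantity $\ell_{\max}$; crucially, it is independent of the request sequence $\sigma$. Hence it plays exactly the role of the constant $\beta$ in the definition of $\rho$-competitiveness given in the introduction, and the theorem follows. I do not foresee a genuine obstacle here: the heavy lifting — in particular handling the coupling between endpoints via the partition of $\req{\tau}$ across two simultaneous desaturation events, and the case analysis on $k$ inside Lemma~\ref{lem:opt_cost} — has already been absorbed into the two preceding lemmas.
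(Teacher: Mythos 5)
Your proof is correct and is essentially identical to the paper's: both simply compose Lemma~\ref{lem:alg_cost} with (the rearranged) Lemma~\ref{lem:opt_cost} through the shared quantity $\sum_{d \in D(\I)} \ell(\assign{d})$ and observe that the additive term $4\cdot|\VV|\cdot(\alpha+\ell_{\max})$ is independent of the request sequence. The only difference is the explicit constant ($6$ versus the paper's $12$), which stems from the paper's Lemma~\ref{lem:opt_cost} statement using $3\cdot(b+1)$ while its own proof actually establishes $6\cdot(b+1)$; this has no bearing on the asymptotic claim.
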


\begin{proof}
Fix any input instance $I$ and let $D(\I)$ be the number of desaturation 
events that occurred when \ALG was executed on $\I$. 
By Lemmas~\ref{lem:alg_cost} and~\ref{lem:opt_cost}, we immediately 
obtain that
$
\cost(\ALG,\I) 
\leq 12 \cdot (b+1) \cdot (1+\ell_{\max}/\alpha) \cdot \cost(\OPT,\I)
	+ 4 \cdot |\VV| \cdot (\alpha + \ell_{\max})
$, i.e.,
the competitive ratio is at most 
$12 \cdot (b+1) \cdot (1+\ell_{\max}/\alpha) = O((1+\ell_{\max}/\alpha) \cdot b)$.
\end{proof}


\section{Lower Bound}\label{sec:lower}
	
\begin{theorem}
The competitive ratio of any deterministic algorithm $\DET$ is at least $b$.
\end{theorem}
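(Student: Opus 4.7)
The plan is to reduce the lower bound to a paging-style construction with $b+1$ items and a ``cache'' of size $b$ living at a single node. First I would specialise the instance: take $V=\{w,v_1,\dots,v_{b+1}\}$ and let the fixed network $G=(V,F)$ be the star centred at $w$ with unit-length leaf edges, so that $\ell_{e_i}=1$ for each $e_i=\{w,v_i\}$. Since at most $b$ of the edges $e_1,\dots,e_{b+1}$ may lie in $\DET$'s matching at any one time, the adversary can always find an index $j(t)$ with $e_{j(t)}\notin M$ and request $e_{j(t)}$. The instance $\I$ consists of $N$ such adaptively chosen requests, with $N$ to be made large at the end of the argument.

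Next I would bound $\cost(\DET,\I)$ from below. By construction every served request is outside the current matching, so $\DET$ pays at least $\ell_{\sigma_t}=1$ in routing cost at each of the $N$ steps, giving $\cost(\DET,\I)\geq N$, independently of any reconfigurations that $\DET$ chooses to perform.

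To upper bound $\cost(\OPT,\I)$ I would use an averaging argument over $b+1$ purely static offline strategies $O_1,\dots,O_{b+1}$, where $O_j$ pays $b\alpha$ at the start to install the matching $M_j=\{e_i:i\neq j\}$ and never reconfigures again. A request to $e_i$ costs $O_j$ a routing unit iff $i=j$, so each of the $N$ requests contributes exactly one routing unit to exactly one strategy, and $\sum_{j=1}^{b+1}\cost(O_j)=(b+1)\,b\alpha+N$. Averaging yields $\min_j\cost(O_j)\leq b\alpha+N/(b+1)$, and since $\OPT$ is no worse than any offline strategy, $\cost(\OPT,\I)\leq b\alpha+N/(b+1)$.

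Combining the two bounds gives $\cost(\DET,\I)\geq (b+1)\,\cost(\OPT,\I) - b(b+1)\alpha$, so for any fixed additive slack allowed by the definition of competitiveness, choosing $N$ large enough forces the multiplicative gap to exceed $b$ (in fact it tends to $b+1$). The main ``obstacle'' is really just a sanity check: one has to verify that the star is a legal fixed network in the model and that $\OPT$'s constant startup cost $b\alpha$ is absorbed by the additive term as $N$ grows; both are routine. No dependency between different leaves needs to be tracked, because the entire construction lives on edges incident to the single node $w$, which is precisely why the single-node paging analogy suffices and why no coupling between per-node caches has to be fought against here.
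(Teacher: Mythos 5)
Your adversarial construction coincides with the paper's: a star with $b+1$ leaves around a single hub, unit-length edges, and an adversary that always requests a hub--leaf pair currently outside $\DET$'s matching, which exists because at most $b$ of the $b+1$ candidate edges can be matched at any time. Where you genuinely diverge is in the treatment of the offline side. The paper groups the requests into \emph{chunks} of $\alpha$ identical requests and exhibits a single dynamic offline strategy in the style of Belady's rule (evict the matched edge not requested in the next $b$ chunks), which reconfigures only once every $b$ chunks; the chunking is needed precisely so that $\DET$'s per-chunk cost of at least $\alpha$ can be compared to that offline strategy's amortized reconfiguration cost of roughly $\alpha/b$ per chunk. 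You instead issue single requests and bound $\cost(\OPT,\I)$ by averaging over the $b+1$ \emph{static} strategies $O_1,\dots,O_{b+1}$, each omitting one candidate edge; since bypassing via the fixed network is permitted in this model, these static strategies are feasible, each request charges a routing unit to exactly one of them, and the one-time setup cost $b\alpha$ is absorbed into the additive term. Both arguments are correct. Yours is arguably cleaner---no chunking and no offline eviction rule to verify---and in fact yields the slightly stronger asymptotic bound $b+1$ rather than $b$; the paper's version has the merit of showing that the gap is already realized against a natural dynamic offline competitor, mirroring the classical deterministic paging lower bound.
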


\begin{proof}
Let our graph be a star of $b+2$ nodes $v_0, v_1, \dots,$ \linebreak $v_b, v_{b+1}$ 
and non-reconfigurable edge set
\[
	F = \left \{\, (v_0, v_1), (v_0, v_2), \dots, (v_0, v_{b+1}) \,\right\}.
\]
Each edge of $F$ has length $1$. We start with
any matching that connects $v_0$ to $b$ leaves. At any time, the adversary chooses $v_i$ 
which is not currently matched with $v_0$, and requests a node pair $(v_0,v_i)$ 
for $\alpha$ times. These $\alpha$ requests constitute one chunk.

For each chunk, \DET pays at least $\alpha$: 
either for modifying the matching or for bypassing all $\alpha$ requests.
An offline algorithm \OFF (that knows the entire input sequence) could
however make a smarter selection of an edge to remove from the matching: \OFF 
chooses the one which is not
going to be requested in the nearest $b$ rounds. Hence, \OFF pays at most
$\alpha \cdot \lceil k / b \rceil$ for $k$ chunks of the input.
For growing $k$, the ratio between the costs of \DET and \OFF 
becomes arbitrarily close to $b$, and hence the lemma follows. 
\end{proof}


\section{Simulations}\label{sec:simulations}

In order to complement our theoretical contribution
and analytical results on the competitive ratio
in the worst case, we conducted extensive simulations,
evaluating our algorithms on real-world traffic
traces. 
In the following, we report on our main
results. 

\subsection{Methodology}\label{sec:methodology}

All our algorithms are implemented
in Python (3.7.3), using 
the graph library NetworkX (2.3.2). 
All simulations were conducted on a machine with two Intel Xeon E5-2697V3 processors with 2.6 GHz, 128 GB RAM, and 14 cores each. 

Our simulations are based on the following workloads:
\begin{itemize}
	\item \emph{Facebook~\cite{roy2015inside}:}
	We use the batch processing trace (Had\-oop)
	from one of Facebook datacenters, as well as traces
	from one of Facebook's database clusters.
	
	\item \emph{Microsoft~\cite{projector}:}
	This data set is simply a probability
	distribution, describing the rack-to-rack
	communication (a traffic matrix).
	In order to generate a trace, we sample
	from this distribution \emph{i.i.d.}
	Hence, this trace does not
	contain any temporal structure by design
	(e.g., is not bursty)~\cite{sigmetrics20complexity}.
	However, it is known that it contains
	significant spatial structure (i.e.,
	is skewed). 
	
	\item \emph{pFabric~\cite{pfabric}:}
	This is a synthetic trace and we run
	the NS2 simulation script obtained from the authors of the paper
	to generate a trace. 
\end{itemize}

In order to evaluate our algorithm, we are comparing 
four different scenarios in our simulations:
\begin{itemize}
	\item \emph{Oblivious:} The network topology is fixed and not
	optimized towards the workload by adding reconfigurable links. 
		\item \emph{Static:} The network topology is enhanced with
		an optimal static $b$-matching, computed with the perfect
		knowledge of the workload ahead of time. 
		\item \emph{Online BMA:} The online algorithm described in this paper.
		\item \emph{LRU BMA:} Like online BMA, however, 
		the cache is now managed according to a 
		least-recently used (LRU) strategy. In other words, when a link
		needs to be cached and the cache is full, the least recently used
		link in the cache is evicted.
\end{itemize}


\begin{figure*}
\begin{center}
	   \includegraphics[width=.3\textwidth]{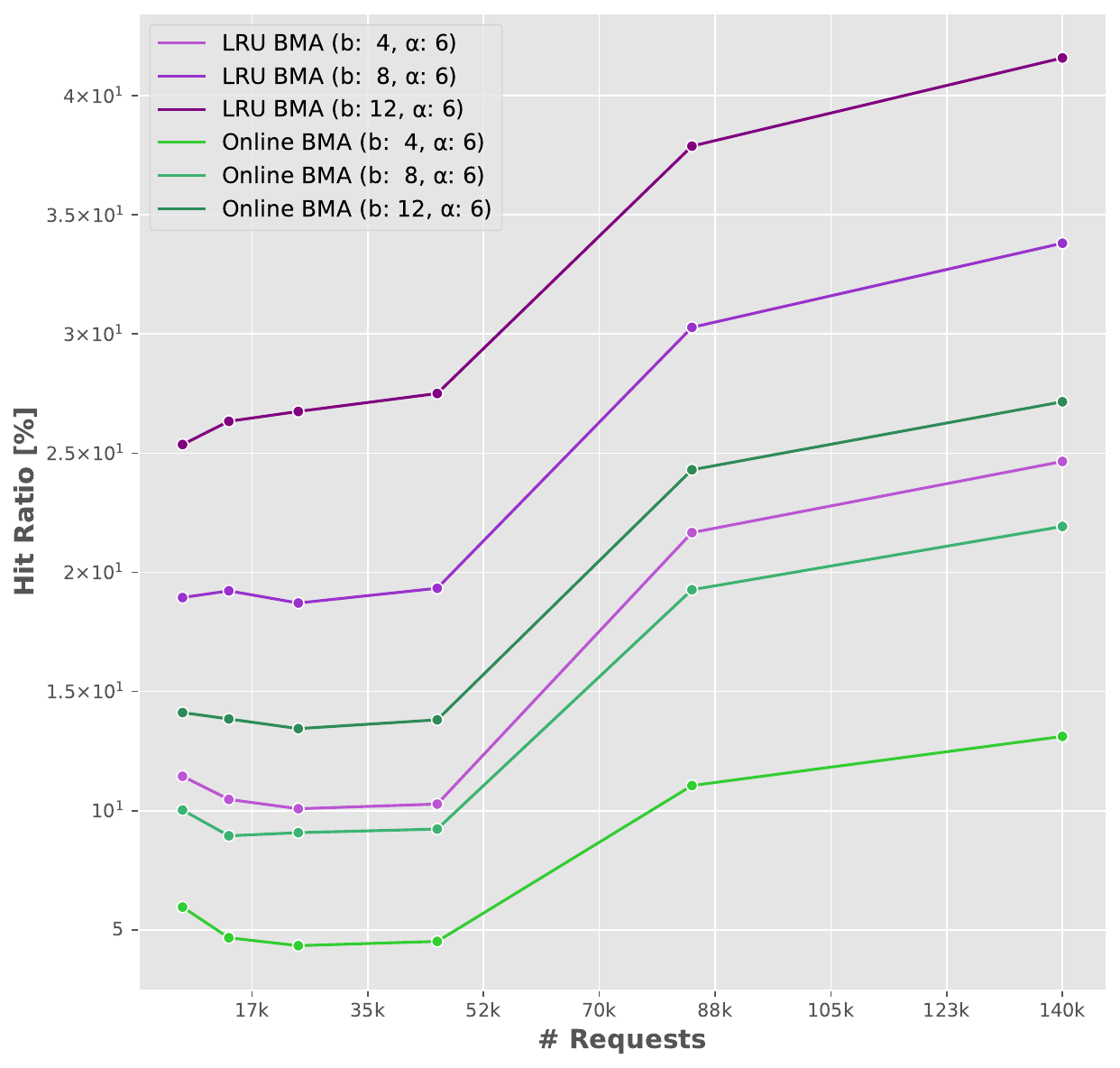}
	   \includegraphics[width=.3\textwidth]{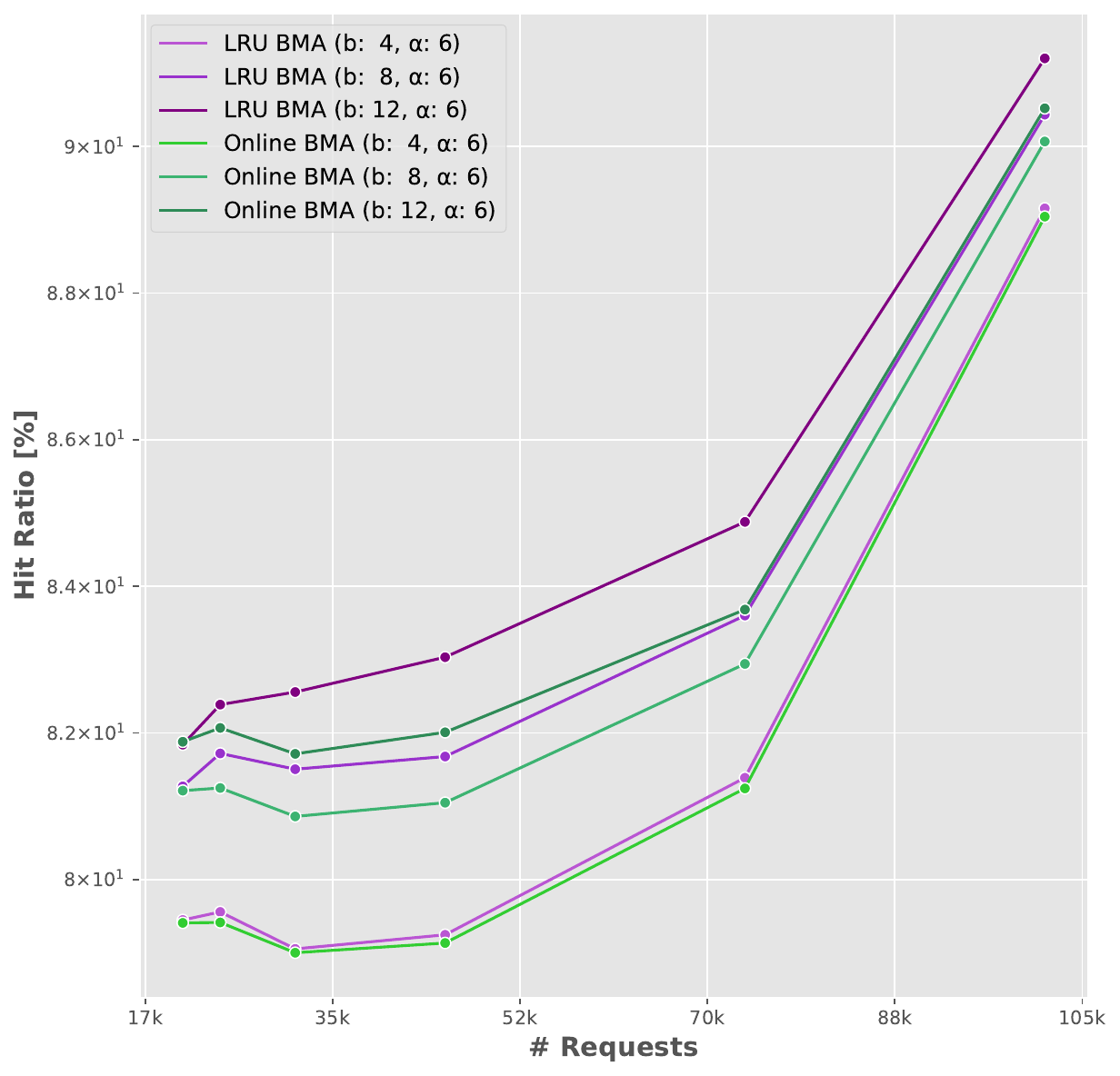}
	   \includegraphics[width=0.3\textwidth]{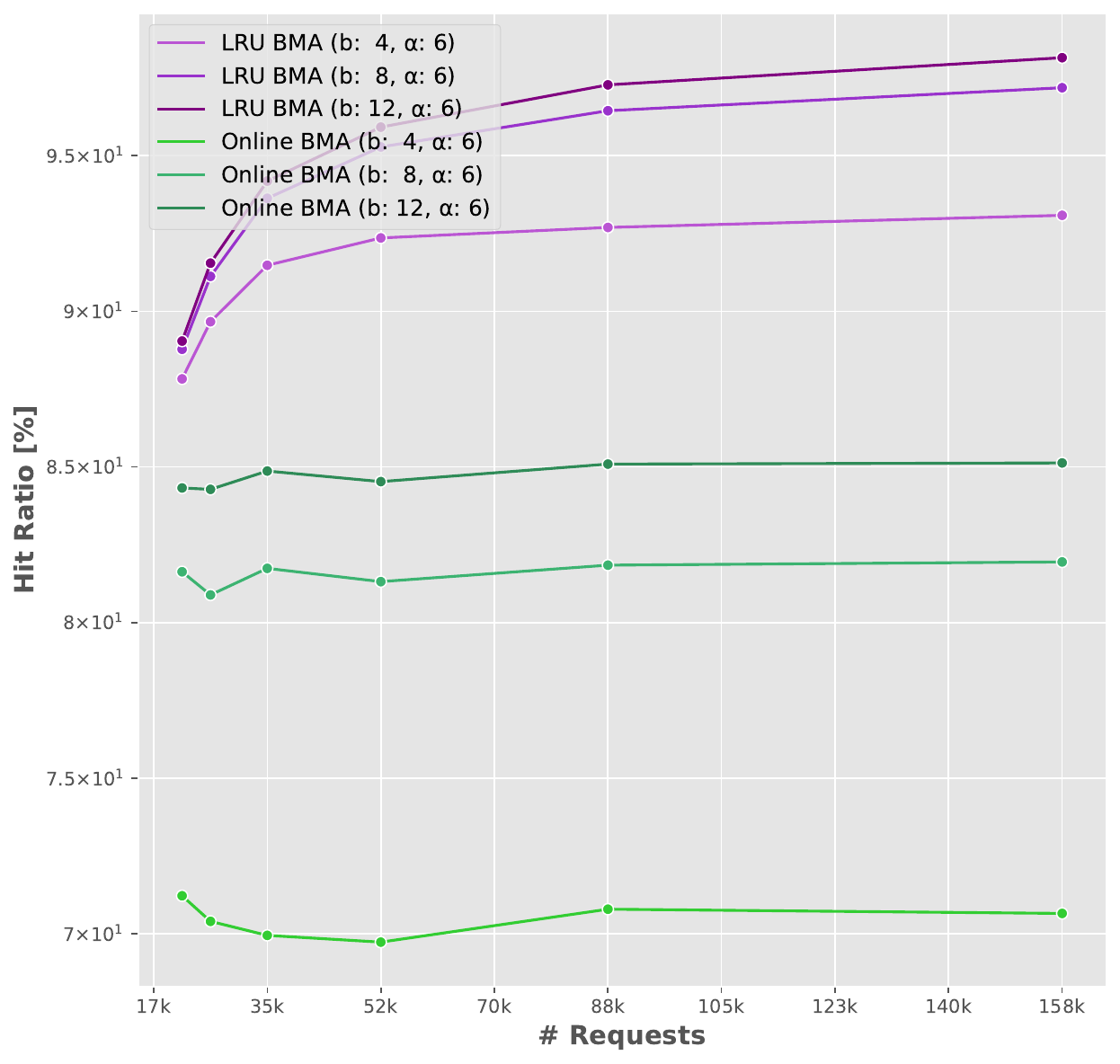}
		\\
	\caption{
	\emph{Left: Hit ratio for Facebook database cluster trace (with lower temporal locality),
	Middle: pFabric trace (with high temporal locality),
	Right: Microsoft trace (with high spatial locality).}
	}
	\label{fig:HitRatioEval}
\end{center}
\end{figure*}

\begin{figure*}[t]
\begin{center}
	   \includegraphics[width=.30\textwidth]{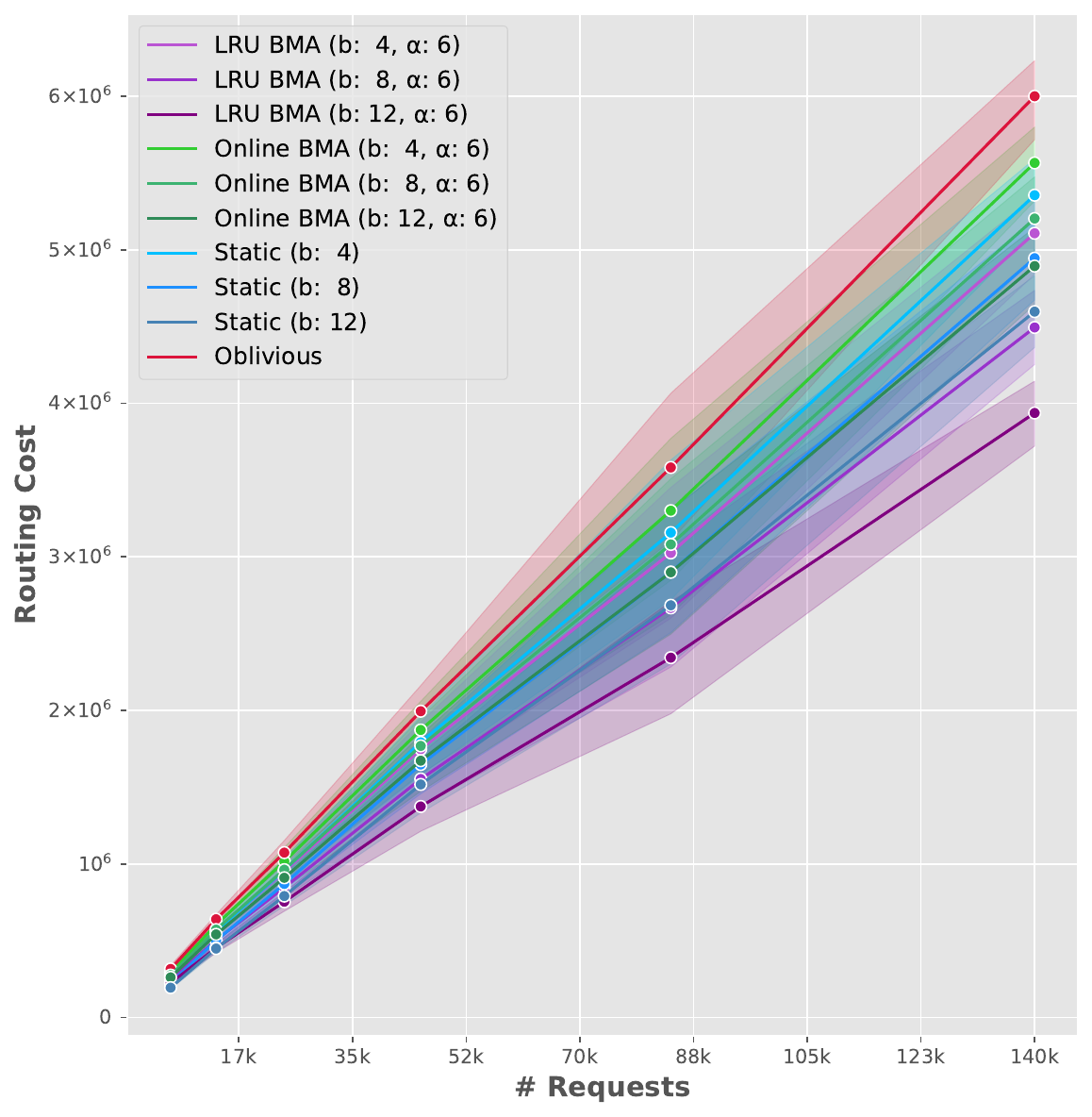}
	   \includegraphics[width=.32\textwidth]{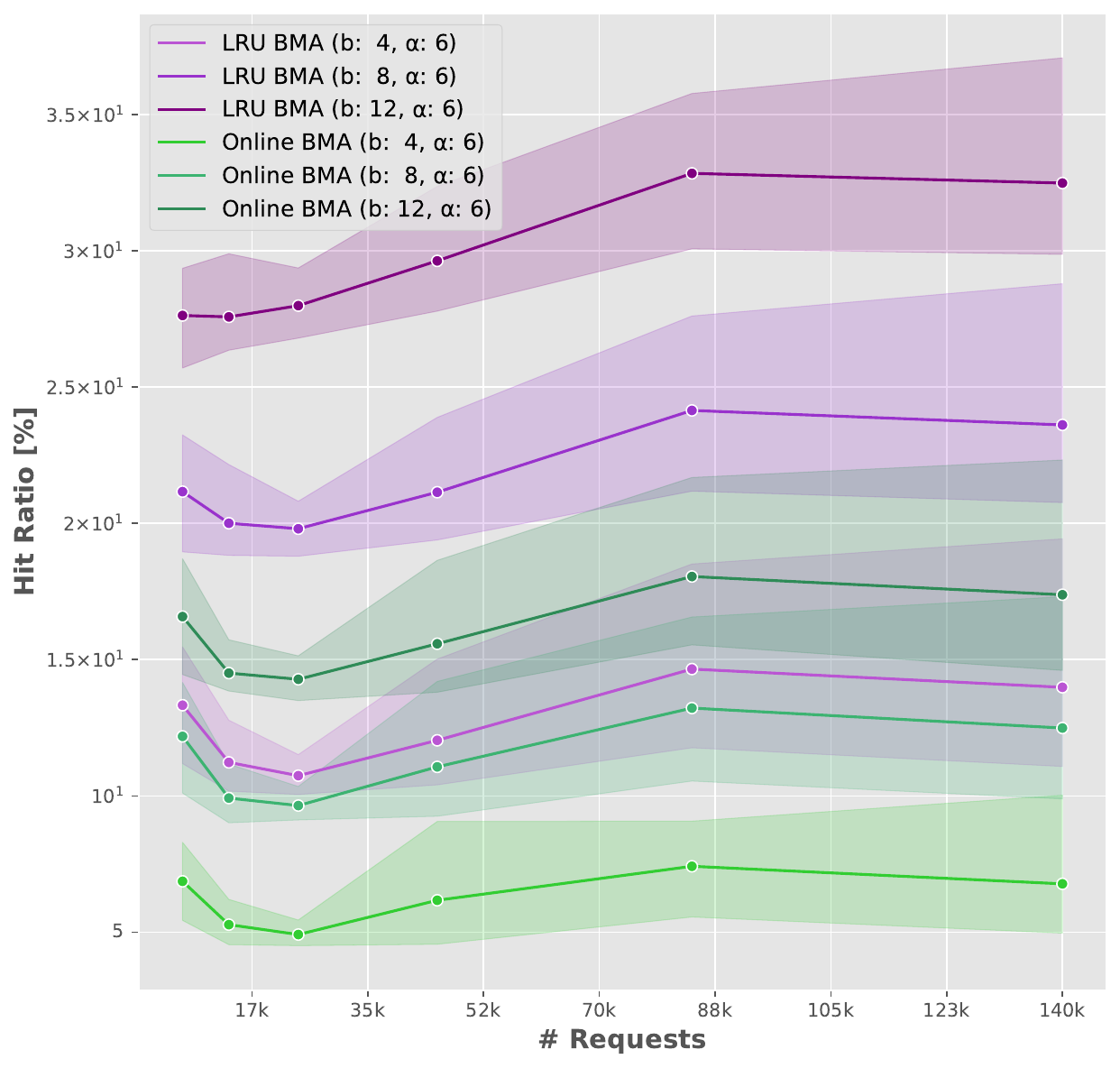} 
	   \includegraphics[width=.30\textwidth]{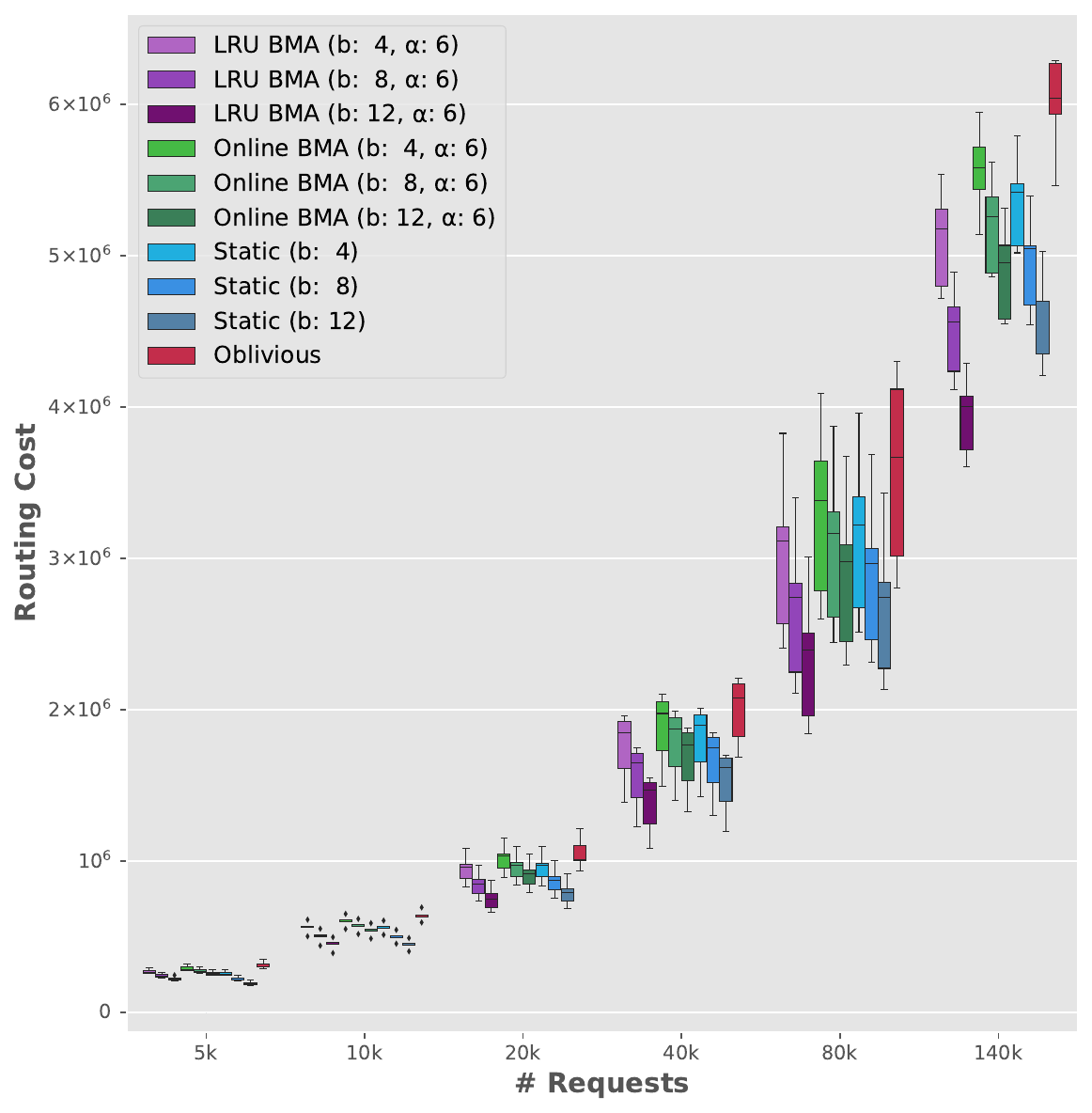}\\
	\caption{
	\emph{Left and Right: Facebook Hadoop cluster: routing costs.
	Middle: Facebook Hadoop cluster: hit ratio for different cache sizes.}
	}
	\label{fig:cacheSizeEval}
\end{center}
\end{figure*}

For all simulations, we assume a 
Clos-like datacenter topology~\cite{alFares2008},
connecting 100 servers (leaf nodes of the Clos topology).
In addition, the number of requests for each of our simulations 
depends on the actual trace, 
therefore the simulations on the Facebook cluster 
have a slightly different amount of requests 
than e.g., the Microsoft trace data.
Each test run was performed with six different request counts.
The simulations were repeated 5 times, each time with a different subset of the whole dataset to account for certain variance in the data;
the presented results are averaged over these simulation runs.
We evaluated our algorithms with several values for $b \in \{4, 8, 12\}$ and $\alpha=6$.
Note that for larger $b$, less traffic will be routed over the static network, given 
our cost function. Given this, and the fact that reconfigurable links require space,
we will be particularly interested in relatively small values of $b$: 
only a small fraction of all possible  $n\cdot (n-1)$ links is actually used.
Evaluating the effectiveness of small values for $b$ is hence
not only more interesting, but also more realistic.


\begin{figure*}[t]
\begin{center}
	   \includegraphics[width=.4\textwidth]{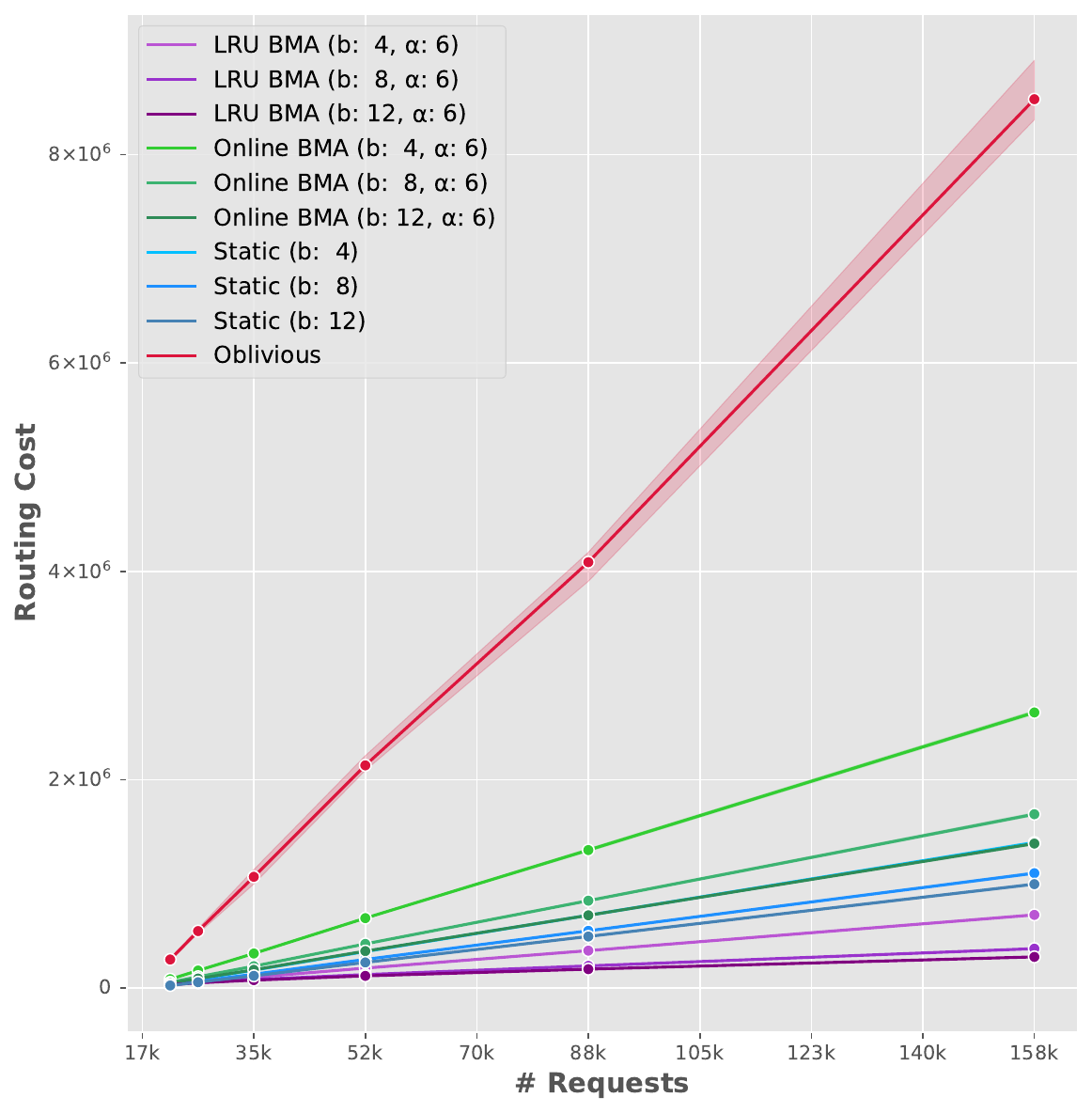}
		~
	   \includegraphics[width=.42\textwidth]{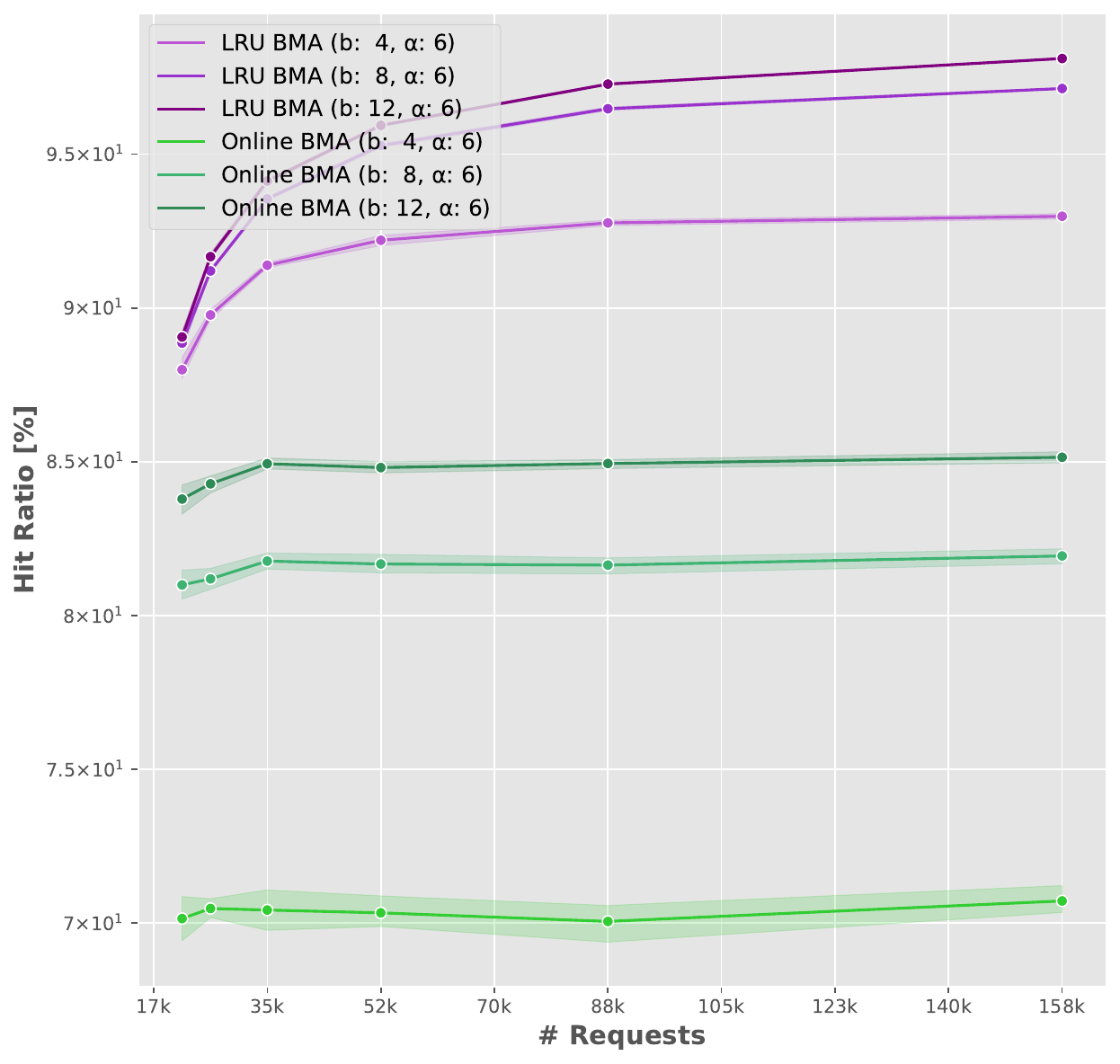}
		\\
	\caption{
	\emph{Left:} Microsoft ProjecToR: routing costs.
	\emph{Right:} Microsoft ProjecToR: hit ratio.
	}
	\label{fig:MicrosoftEval}
\end{center}
\end{figure*}

\subsection{Results}

In order to study to which extent the
\emph{Online BMA} algorithm 
can leverage the temporal locality available
in traffic traces,
we first consider the effectiveness of the 
link cache, as a microbenchmark. 
Figure~\ref{fig:HitRatioEval} shows a comparison of the hit ratio of Facebook's database traces (left), pFabric traces (right) and Microsoft traces. 
We can observe that in the case of the pFabric and Microsoft traces,
a relatively high hit ratio is obtained after a short warm-up period,
especially if a least-recently-used (LRU BMA) caching strategy is used.
We can also observe that our online algorithm performs
better under the pFabric and Microsoft traces, 
which is expected: empirical studies have 
already shown that these traces 
feature more structure than the batch processing 
traces~\cite{sigmetrics20complexity}.
We also find that the results naturally depend 
on the cache size, see Figure~\ref{fig:cacheSizeEval} (left).
An important remark is, that the degree
$b$ need to be understood relative to the total number of
switch ports, i.e., similar results are obtained for
relatively larger $b$ values.

It is interesting to compare the results of our online
algorithms to demand-oblivious topologies
as well as to static topologies.
Figure~\ref{fig:cacheSizeEval} gives a comprehensive overview of our algorithms performance in terms of route lengths (left and right plot) and also regarding the cache hit ratio (middle plot) for different cache sizes for the Facebook Hadoop cluster.
Notably, Figure~\ref{fig:cacheSizeEval} (left and middle plot) gives 
insights into our algorithm's performance over all 5 test runs, illustrating the average result, as well as the maximum and minimum result (shaded areas).   

As expected, \emph{Oblivious} always performs worse than \emph{Static}, 
\emph{Online BMA} and \emph{LRU BMA}. 
We further observe that the performance of \emph{Online BMA} 
comes close to the performance of \emph{Static},
which knows the demands \emph{ahead of time} (but is fixed).
We expect that under longer request sequences, 
when larger shifts in the communication patterns
are likely to appear, the online approach
will outperform the static offline algorithm.
To investigate this, however, the publicly available traffic traces
are not sufficient.

While the Microsoft trace does not contain temporal
structure as it is sampled i.i.d., 
it can still be exploited toward a more efficient routing
and yield a very high cache hit ratio,
due to its spatial structure, i.e., the skewed
traffic matrix. See Figure~ \ref{fig:MicrosoftEval}.

In conclusion, while our main contribution in this paper
concerns the theoretical result, we observe that our 
online algorithm performs fairly well under real-world
workloads, even without further optimizations (besides
an improved cache eviction strategy).


\section{Related Work}\label{sec:relwork}

Reconfigurable networks based on
optical circuit switches, 60 GHz 
wireless, and free-space optics,  
have received much attention over the last years~\cite{avin2019renets,firefly,helios,projector,reactor,teh2020flexspander,zhou2012mirror},
see also the recent survey for an overview~\cite{sigact19}.
For an overview of the theoretical foundations of demand-aware
networks, we refer the reader to~\cite{ccr18san}.
It has been shown empirically that reconfigurable networks
can achieve a performance similar to a demand-oblivious
full-bisection bandwidth network at significantly lower cost~\cite{firefly,projector}. 
Furthermore, the study of reconfigurable networks is not limited
to datacenters and interesting use cases also arise
in the context of wide-area networks~\cite{Jia2017,jin2016optimizing}
and overlays~\cite{ratnasamy2002topologically,scheideler2009distributed}.

Our paper is primarily concerned with the \emph{algorithmic}
problems introduced by such technologies.
The basic problem of how to optimally enhance 
a given fixed datacenter topology with reconfigurable links
has already been studied in the literature, both in the static setting, 
e.g..,~\cite{fenz2019efficient,ancs18,ccr19danr},
as well as in the dynamic setting~\cite{avin2019renets,dinitz2020scheduling,eclipse}. 
For example, Avin et al.~\cite{avin2019renets} presented a statically optimal dynamic network, ReNets,
which finds an optimal tradeoff between the benefits and costs of reconfigurations.
The static problem is related to \emph{graph augmentation}
models, which consider the problem of adding edges to a given
graph, so that path lengths are reduced. For example, 
Meyerson and Tagiku~\cite{meyerson2009minimizing} study how to add ``shortcut edges''
to minimize the average shortest path distances, 
Bil{\`o} et al.~\cite{bilo2012improved} and
Demaine and Zadimoghaddam~\cite{demaine2010minimizing}
study how to augment a network to reduce its diameter,
and there are several interesting results on how to add ``ghost edges'' 
to a graph such that it becomes (more) ``small world''~\cite{gozzard2018converting,ghost-edges,small-world-shortcut}.
However, these edge additions can be optimized globally and in a biased manner, and
hence do not form a matching. In particular, it is impractical (and does not scale) to add many
flexible links per node in practice. 
The dynamic setting is related to classic switch scheduling
problems~\cite{chuang1999matching,mckeown1999islip}.

Another line of related works considers the design of demand-aware networks
from scratch~\cite{rdan,dan,infocom19dan,Huq2017LocallySS}, ignoring the fixed topology which is available in current 
architectures (and in the near future).
Also for this model, dynamic approaches which come with provable guarantees,
have been proposed in the literature, e.g., SplayNets~\cite{ton15splay}
and Push-Down Trees~\cite{latin20}. 

In this paper, we initiated the study of an online
version of the dynamic $b$-matching problem. A polynomial-time algorithm for the static 
version of this problem has already been presented 
over 30 years ago~\cite{anstee1987polynomial,Schrij03},
and the problem still receives attention today due
to its numerous applications, for example
in settings where customers in a market need to be matched to a 
cardinality-constrained set of items (e.g., 
matching children to schools, reviewers to papers, or
donor organs to patients), but also in protein structure alignment, 
computer vision, estimating text similarity, etc.

Note that there is a line of papers studying 
(bipartite) online matching
variants
\cite{online-matching-simple,adwords-primal-dual,concave-matching,ranking-primal-dual,online-matching,bipartite-matching-strongly-lp,adwords-lp,adwords-ec}.
This problem attracted significant attention in the last decade because of its connection
to online auctions and the AdWords problem~\cite{adwords-survey}. 
Despite similarity in names (e.g., the
bipartite (static) $b$-matching variant was considered
in~\cite{kalyanasundaram2000optimal}), this model is fundamentally
different from ours. That is, it considers bipartite graphs in which nodes and
(weighted) edges appear in time and the algorithm has to choose a subset of
edges being a matching. In our scenario, the (non-bipartite) graph is given a
priori, and the algorithm has to \emph{maintain a dynamic} matching. One way of
looking at our scenario is to consider the case where edges weights can change
over time and the matching maintained by an algorithm needs to catch up with such
changes.


\section{Conclusion}\label{sec:conclusion}

Motivated by emerging reconfigurable datacenter networks
whose topology can be dynamically optimized toward the workload, 
we initiated the study of a fundamental problem, online $b$-matching.
In particular, we presented competitive online algorithms 
which find an optimal trade-off between the benefits and costs of
reconfiguring the matching.
While our main contribution concerns the derived theoretical 
results (i.e., 
the competitive online algorithm and the lower bound),
we believe that our approach has several interesting practical
implications: our algorithm is simple to implement, has a low runtime
and, as we have shown, performs fairly well also under different real-world
workloads and synthetic traffic traces. 

Our work opens several interesting avenues for future research.
In particular, we have so far focused on deterministic algorithms,
and it would be interesting to explore randomized approaches; 
in fact, our first investigations in this direction indicate that 
the randomized setting is more challenging, also due to
the introduced dependencies, and exploiting the connection to online
paging is difficult.
On the practical side, it 
would be interesting to investigate specific reconfigurable optical technologies as well as 
specific datacenter topologies (such as Clos topologies) in more details, 
and tailor our algorithms and develop distributed implementations for 
an optimal performance in this case study.

{
\balance
\bibliographystyle{abbrv}
\bibliography{refs}
}

\end{document}